\def\be{\begin{equation}}
\def\ee{\end{equation}}
\def\bea{\begin{eqnarray}}
\def\eea{\end{eqnarray}}
\def\bma{\begin{mathletters}}
\def\ema{\end{mathletters}}
\def\0{\overline{0}}
\def\tr{\mbox{tr}}
\def\q0{\underline{0}}
\def\H{{\cal H}}
\def\id{{\mathbb I}}
\def\H{{\cal H}}
\def\tr{\mbox{tr}}
\def\one{\leavevmode\hbox{\small1\normalsize\kern-.33em1}}
\def\bra#1{\langle#1|} \def\ket#1{|#1\rangle}
\def\proj#1{\ket{#1}\!\bra{#1}}
\newtheorem{theo}{Theorem}
\newtheorem{remark}{Remark}
\newtheorem{defin}[theo]{Definition}
\newtheorem{lemma}[theo]{Lemma}
\def\id{{\mathbb I}}
\def\tr{\mbox{tr}}
\begin{document}
\title{A physical approach to Tsirelson's problem}
\author{M. Navascu\'es, T. Cooney, D. P\'erez-Garc\'ia and N. Villanueva\\Facultad de Matem\'aticas, Universidad Complutense de Madrid}
\date{}
\maketitle

\begin{abstract}
Tsirelson's problem deals with how to model separate measurements in quantum mechanics. In addition to its theoretical importance, the resolution of Tsirelson's problem could have great consequences for device independent quantum key distribution and certified randomness. Unfortunately, understanding present literature on the subject requires a heavy mathematical background. In this paper, we introduce quansality, a new theoretical concept that allows to reinterpret Tsirelson's problem from a foundational point of view. Using quansality as a guide, we recover all known results on Tsirelson's problem in a clear and intuitive way.
\end{abstract}

\section{Introduction}

Standard (i.e., non-relativistic) quantum mechanics describes bipartite systems by associating them to a tensor product of Hilbert spaces $\H=\H_A\otimes\H_B$. Measurements and operations corresponding to party $A$ ($B$) are then described by linear operators on $\H$ that act trivially over $\H_B$ ($\H_A$). Things are different, though, in Algebraic Quantum Field Theory (AQFT), where the Haag-Kastler axioms only require that operators corresponding to measurements performed in space-like separated bounded regions of Minkowski space-time commute \cite{AQFT}.

Tsirelson's problem consists in finding out if the bipartite correlations generated one way or the other are essentially the same, i.e., if any set of AQFT-like correlations admits a non-relativistic approximation. The problem of bounding the set of tensor bipartite correlations arises naturally in device-independent quantum key distribution \cite{dev_QKD} and certified randomness generation \cite{dev_rand}. Since current theoretical tools can only bound the set of AQFT-like correlations \cite{hier}, we may be unnecessarily limiting the range of all such protocols.

Tsirelson himself proved that commutative and tensor sets of correlations coincide if the underlying Hilbert space $\H$ where measurement operators act is finite dimensional (see \cite{werner}). Sholtz and Werner then showed that this result still holds under the weaker assumption that Alice's or Bob's operator algebra is nuclear \cite{werner}. Such is the case if, for instance, Alice is restricted to perform two different dichotomic measurements. More recently, Tsirelson's problem was related to Connes' embedding conjecture in $C^*$-algebras \cite{connes}. The above results rely on very heavy mathematical machinery; their proofs (and sometimes even statements) thus appear incomprehensible to the average physicist.

Tsirelson's problem has recently gained popularity in the Quantum Information community, not only due to the practical consequences of its resolution, but also for the aforementioned connections with sophisticated areas of mathematics. Looking at it from the outside, though, that Tsirelson's problem is related to a conjecture which hundreds of mathematicians have attempted to solve over the course of 35 years is not good news at all. It implies that a physicist aiming at solving it should study advanced mathematics for several years in order to find himself as lost as the very brilliant mathematical minds who are currently trying to solve Connes' embedding conjecture.

In this paper we will introduce a new foundational concept called \emph{quansality}, related to the possibility of finding a local quantum description for experiments in a bounded region of space-time. We will show that, despite its intuitive nature, quansality can be proven mathematically equivalent to a tensor product description of bipartite quantum correlations. Therefore, if there existed an AQFT-type scenario giving rise to bipartite correlations impossible to reproduce in non-relativistic quantum systems, then any quantum model attempting to explain the measurement statistics of one the parts could be experimentally falsified.

This intuition will allow us to recover all known results on Tsirelson's theorem, with physical insight and basic linear algebra as our only tools. With this, we thus hope to make Tsirelson's problem accessible to any physicist with a reasonable mathematical knowledge.

The paper is organized as follows: in Section \ref{notation} we will introduce the notation to be used along the article. Then we will define and motivate the concept of quansality. In Section \ref{tsirelson_prob} we will present Tsirelson's problem and derive Tsirelson's finite dimensional result by explicitly constructing a local quantum model for one of the party's observations. We will show how to extend some results to the infinite dimensional case in Section \ref{infinite}. Finally, we will present our conclusions.

\section{Mathematical notation}
\label{notation}

Along this article, given a separable Hilbert space $\H$, we will regard quantum states as positive elements of $S_1=\{A\in B(\H): \|A\|_1<\infty\}$, that we will suppose normalized unless otherwise specified. Note that $S_1\subset S_2=\{A\in B(\H): \tr(AA^\dagger)<\infty\}$. Both $S_1$ and $S_2$ will play an important role in the discussion of Tsirelson's theorem.

Being the trace norm the natural norm of the set $S_1$, we will say that a sequence of quantum states $(\rho_N)$ converges to $\rho$ in $S_1$ if $\lim_{N\to\infty}\|\rho_N-\rho\|_1=0$. Analogously, we will say that $(\rho_N)$ converges to $\rho$ in $S_2$ if $\lim_{N\to\infty}\tr\{(\rho_N-\rho)(\rho_N-\rho)^\dagger\}=0$. Notice that convergence in $S_1$ implies convergence in $S_2$; the contrary is generally false.

\section{Quansality}
\label{quansality}

Assume that we are in a scenario where two separated parties, Alice and Bob, are each able to tune interactions $x$ and $y$ in their respective labs (see Figure \ref{Alice_Bob}). Such interactions (measurements) will in turn produce two pieces of experimental data $a$ and $b$ (the outcomes). We will make the realistic assumption that the sets $A$ and $B$ of possible interactions which Alice and Bob can produce, as well as the set of possible experimental results they can observe, are finite. That way, if Alice and Bob compare their measurement results after many repetitions of the experiment, they will be able to estimate the set of probability distributions $\{P(a,b|x,y):x\in A, y\in B\}$. Following Tsirelson's terminology, we will call such a set \emph{behavior}, and we will denote it by $P(a,b|x,y)$. This notation will not lead to confusion, since along the text it will always be clear whether by $P(a,b|x,y)$ we refer to a probability or a behavior.

\begin{figure}
  \centering
  \includegraphics[width=12 cm]{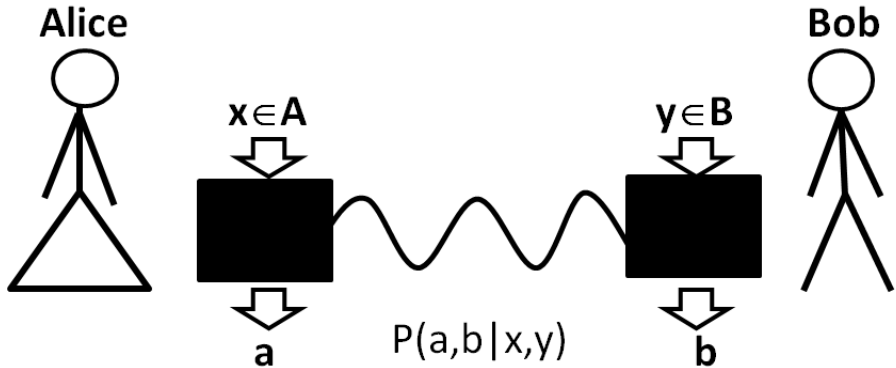}
  \caption{\textbf{Sketch of a bipartite experiment of non-locality.}}
  \label{Alice_Bob}
\end{figure}

Now, suppose that Alice performs a measurement $x$ with outcome $a$, and denote by $P^x_a(b|y)$ Bob's resulting single behavior. Then the fact that Bob's statistics can be described locally is equivalent to the condition

\be
\sum_{a}P(a|x)P^x_a(b|y)=P(b|y),
\label{causality}
\ee

\noindent where $P(b|y)$ is a probability distribution independent of $x$. This condition ensures that Bob's statistics are not affected by Alice's measurement choice $x$, and so Bob can study his system without introducing Alice as a variable. Taking into account that $P(a|x)p^x_a(b|y)=P(a,b|x,y)$, we have that the last condition is equal to

\be
\sum_{a}P(a,b|x,y)=P(b|y),
\ee

\noindent the no-signaling condition \cite{popescu}.
\vspace{10pt}

Suppose now that we seek a description of Bob's part that is not only independent of Alice, but also \emph{compatible with Quantum Mechanics} (QM). How should we modify condition (\ref{causality}) to account for this extra condition? The answer requires a detailed discussion.

To build a quantum model for his system, Bob must find a set of projector operators $\{F^y_b\}\subset B(\H_B)$, with $\sum_b F^y_b=\id_{\H_B}$, and a quantum state $\sigma\in S_1(\H_B)$ such that $P(b|y)=\tr(\sigma F^y_b)$. Suppose he does. 

Now, imagine that Alice switches interaction $x$ and reveals Bob both $x$ and her measurement results in all realizations of the experiment. If Bob's model is truly independent of Alice, her distant actions should not alter the previous picture. Consequently, if his initial model was correct, then Bob should be able to find states $\{\bar{\sigma}^x_a\}\subset S_1(\H_B)$ such that $P(b|y,a,x)=\tr(\bar{\sigma}^x_aF^y_b)$. 

If for some choice of $\{\bar{\sigma}^x_a\}$ we have that $\sum_{a}P(a|x)\bar{\sigma}^x_a=\sigma$, then Bob's model prior to Alice's communications can be considered correct; Alice is just completing such a model with supplementary information $(x,a)$ about Bob's physical states. If, however, $\sum_{a}P(a|x)\bar{\sigma}^x_a=\sigma$ cannot be enforced for any choice of $\{\bar{\sigma}^x_a\}$ compatible with the statistics $P(b|y,a,x)$, then Bob's model was flawed from the very beginning, and Alice has helped him to falsify it.

This way, we arrive at the notion of \emph{quansality}.

\begin{defin}\textbf{Quansality}\\
Let $P(a,b|x,y)$ be a bipartite behavior. We say that $P(a,b|x,y)$ is quansal (QUANtum cauSAL) if there exist a Hilbert space $\H_B$, projector operators $\{F^y_b:\sum_b F^y_b=\id_{\H_B}\}$ and a set of subnormalized quantum states $\{\sigma^x_a\}\subset S_1(\H_B)$ such that $P(a,b|x,y)=\tr(F^y_b\sigma^x_a)$ and

\be
\sum_{a}\sigma^x_a=\sigma\not=\sigma(x).
\label{q_causality}
\ee

\end{defin}

\noindent From the above discussion, it follows that quansality is a necessary and sufficient condition for Bob to find an independent quantum description of his physical system. 

Note that quansality implies the no-signaling condition (from Alice to Bob). Condition (\ref{q_causality}), though, is not necessary to prevent signaling from Alice, since it could well be that, although $\sum_a\sigma^x_a\not=\sum_a\sigma^{x'}_a$ for $x\not=x'$, both states exhibit the same statistics when measured with $\{F^y_b\}$. In the next section, we will explain how quansality relates to Tsirelson's problem. We will also prove that quansality is a symmetric property, i.e., if Bob's subsystem admits a quantum model, so does Alice's.

\section{Tsirelson's problem}
\label{tsirelson_prob}

\begin{defin}
Let $P(a,b|x,y)$ be a bipartite behavior. $P(a,b|x,y)$ is non-relativistic (or admits a tensor representation) if there exist a pair of Hilbert spaces $\H_A,\H_B$, a state $\rho_{AB}\in S_1(\H_A\otimes\H_B)$ and two sets of projectors $\{E^a_x\in B(\H_A)\}$, $\{F^b_y\in B(\H_B)\}$ such that

\begin{enumerate}
\item $\sum_a E^x_a=\id_A,\sum_b F^y_b=\id_B$, for all $x,y$.
\item $\tr(\rho_{AB}E^x_a\otimes F^y_b)=P(a,b|x,y)$.
\end{enumerate}

The set of all behaviors of this form will be denoted by $Q$.

\label{tensorial}
\end{defin}

\begin{defin}
Let $P(a,b|x,y)$ be a bipartite behavior. We will say that $P(a,b|x,y)$ is relativistic (or admits a field representation) if there exists a separable Hilbert space $\H$, a quantum state $\rho\in S_1(\H)$ and a set of projectors $\{E^x_a,F^y_b:x\in A,y\in B\}$, such that

\begin{enumerate}
\item $\sum_a E^x_a=\sum_b F^y_b=\id$, for all $x,y$.
\item $[E^x_a,F^y_b]=0$, for all $a,b,x,y$.
\item $\tr(\rho E^x_aF^y_b)=P(a,b|x,y)$.
\end{enumerate}

We will call $Q'$ the set of all such behaviors. In \cite{hier} it was proven that $Q'$ is a closed set.

\label{commutative}
\end{defin}

\begin{remark}

It can be proven (see \cite{connes}) that in the definitions of $Q$ and $Q'$ we can relax the condition that measurement operators are projectors and demand instead that such operators $E^x_a,F^y_b$ are positive semidefinite, i.e., elements of a Positive Operator Valued Measure (POVM).

\end{remark}

Since $[E^x_a\otimes \id_B,\id_A\otimes F^y_b]=0$, we have that $\bar{Q}\subset Q'$, i.e., any behavior approximable by non-relativistic behaviors admits a field representation. Tsirelson's problem consists of deciding whether such an inclusion is strict, that is, whether $\bar{Q}\not=Q'$ \cite{werner}. In addition to its relevance in Algebraic Quantum Field Theory (AQFT), Tsirelson's problem acquires a foundational nature due to the following lemma (inspired by the constructions presented in \cite{gleason,gleason2}).

\begin{lemma}
$P(a,b|x,y)$ is non-relativistic iff it is quansal.
\label{equivalence}
\end{lemma}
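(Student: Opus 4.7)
The plan is to prove the two implications separately: the ``only if'' direction by a partial trace, and the ``if'' direction by purifying the reduced state $\sigma$ and reading off the ancillary POVM from the ensemble decomposition, invoking the Remark to upgrade POVMs to projectors.

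For the forward direction, suppose $P(a,b|x,y)=\tr(\rho_{AB}\,E^x_a\otimes F^y_b)$. I would define
\[
\sigma^x_a \;:=\; \tr_A\!\bigl((E^x_a\otimes \id_B)\,\rho_{AB}\bigr)\in S_1(\H_B).
\]
Then $\tr(F^y_b\sigma^x_a)=P(a,b|x,y)$ by cyclicity of the trace, and $\sum_a\sigma^x_a=\tr_A\rho_{AB}$ is visibly independent of $x$ because $\sum_a E^x_a=\id_A$. This is exactly (\ref{q_causality}).

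For the converse, suppose $P(a,b|x,y)$ is quansal via $(\H_B,\{F^y_b\},\{\sigma^x_a\})$ with $\sum_a\sigma^x_a=\sigma$. Diagonalizing $\sigma=\sum_i\lambda_i\proj{i}$ on its support, I would take $\H_A$ to be a copy of $\overline{\text{supp}(\sigma)}$ and form the purification $\ket{\psi}=\sum_i\sqrt{\lambda_i}\ket{i}_A\ket{i}_B$, which is a legitimate unit vector because $\sigma\in S_1(\H_B)$ forces $\sum_i\lambda_i<\infty$. The crucial observation is that each $\sigma^x_a$ satisfies $0\le\sigma^x_a\le\sigma$ (all other $\sigma^x_{a'}$ being positive), so by the standard factorization for operator inequalities there is a bounded positive operator $N^x_a$ on $\text{supp}(\sigma)$ with $\sigma^x_a=\sigma^{1/2}N^x_a\sigma^{1/2}$ and $\sum_a N^x_a=\id_{\text{supp}(\sigma)}$. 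I would then define $E^x_a$ on $\H_A$ to be the transpose $(N^x_a)^T$ in the basis $\{\ket{i}\}$, extended by zero off $\text{supp}(\sigma)$ and, if necessary, completed to a POVM by adding the missing element to a single index $a$ per $x$ (this extra mass contributes $0$ when traced against $\sigma^{1/2}\cdot\sigma^{1/2}$). Using the elementary identity $\tr_A(\proj{\psi}(M\otimes\id_B))=\sigma^{1/2}M^T\sigma^{1/2}$, one immediately verifies
\[
\tr\!\bigl(\proj{\psi}\,E^x_a\otimes F^y_b\bigr)\;=\;\tr(\sigma^x_a F^y_b)\;=\;P(a,b|x,y),
\]
giving a tensor representation with POVMs; the Remark then lets us replace the POVMs with projectors.

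The main obstacle is the technical handling of the case in which $\sigma$ fails to have full support: there $\sigma^{-1/2}$ is unbounded and the naive prescription $N^x_a=\sigma^{-1/2}\sigma^x_a\sigma^{-1/2}$ requires justification. The operator inequality $\sigma^x_a\le\sigma$ is precisely what makes $N^x_a$ into a bona fide bounded positive contraction, and restricting $\H_A$ to the closure of $\text{supp}(\sigma)$ sidesteps any domain issues on the null part. A secondary point to check, relevant for the infinite-dimensional case treated in Section \ref{infinite}, is that the sums $\sum_a E^x_a$ and $\sum_a\sigma^x_a$ converge in the appropriate (weak or trace-norm) sense, but this is inherited directly from the hypotheses on the quansal model.
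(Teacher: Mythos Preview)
Your proposal is correct and follows essentially the same route as the paper: the forward direction is identical, and for the converse the paper also purifies $\sigma$ via $\ket{\phi_\sigma}=\sum_j\ket{j}\otimes\sigma^{1/2}\ket{j}$ and sets $E^x_a=[\sigma^{-1/2}\sigma^x_a\sigma^{-1/2}]^T$, which is your $(N^x_a)^T$. You are in fact more careful than the paper, which writes $\sigma^{-1/2}$ without discussing support issues and leaves the passage from POVMs to projectors implicit.
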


\begin{proof}
The right implication is immediate: if $P(a,b|x,y)$ satisfies the conditions of Definition \ref{tensorial}, take $\sigma^x_a\equiv\tr_A(\rho_{AB}E^x_a\otimes\id_B)$ and note that $\sum_a\sigma^x_a=\rho_B$ and that $\tr(\sigma^x_aF^y_b)=\tr(\rho_{AB}E^x_a\otimes F^y_b)=P(a,b|x,y)$.

For the left implication, define

\be
E^x_a\equiv [\sigma^{-1/2}\sigma^x_a\sigma^{-1/2}]^T,
\ee

\noindent where $T$ denotes the transpose with respect to an orthonormal basis $\{\ket{j}\}_j$ for $\H$. Clearly, $E^x_a\geq 0$, and

\be
\sum_aE^x_a=\sum_a[\sigma^{-1/2}\sigma^x_a\sigma^{-1/2}]^T=\id,
\ee

\noindent and so Alice's operators are complete POVM elements. Define now the quantum state $\rho_{AB}\equiv\ket{\phi_\sigma}\bra{\phi_\sigma}$ with

\be
\ket{\phi_\sigma}= \sum_{j=1}^{\mbox{dim}(\H)}\ket{j}\otimes\sigma^{1/2}\ket{j}.
\ee

\noindent Here, we allow for the possibility that $\mbox{dim}(\H)=\infty$ (we are working under the assumption that $\H$ is separable). 

Obviously, $\rho_{AB}\geq 0$. Also, $\tr(\rho_{AB})=\tr_B(\sigma^{1/2}\id\sigma^{1/2})=\tr(\sigma)=1$. The operator $\rho_{AB}$ thus corresponds to a normalized quantum state. Finally, we have that

\begin{eqnarray}
&&\tr(\rho_{AB}E^x_a\otimes F^y_b)=\tr_B(\sigma^{1/2}(E^x_a)^T\sigma^{1/2}F^y_b)=\nonumber\\
&&=\tr(\sigma^x_aF^y_b)=P(a,b|x,y).
\end{eqnarray}

\end{proof}

Therefore, if $\bar{Q}\not= Q'$, then there exist multipartite scenarios in Algebraic Quantum Field Theory where each part cannot describe its own local quantum system in a way consistent with future communications from other parties. Such a situation would be certainly undesirable: if that were the case, then any quantum theory attempting to explain the results of a particular experiment would have to include the rest of the universe in its formulation! 

In the following sections we will show how to recover all existing results on Tsirelson's theorem by building upon this intuition. That is, given a relativistic bipartite distribution, we will find a local quantum model consistent with Bob's measurement statistics, thereby proving that such a distribution also admits a tensor representation.

\section{Finite dimensions and the Heat Vision effect}
In the following, we will prove that definitions \ref{tensorial} and \ref{commutative} are equivalent when we demand the Hilbert spaces $\H,\H_A,\H_B$ to be finite dimensional.

\begin{theo}{\textbf{Tsirelson's Theorem}\\}
Let $P(a,b|x,y)$ admit a finite dimensional field representation. Then, $P(a,b|x,y)$ is non-relativistic.
\label{finite}
\end{theo}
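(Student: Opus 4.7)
The plan is to invoke Lemma \ref{equivalence} and reduce Tsirelson's theorem to the construction of a quansal model: given a finite-dimensional $\H$, a state $\rho\in S_1(\H)$, and commuting projector sets $\{E^x_a\},\{F^y_b\}$ reproducing $P(a,b|x,y)$, I want to exhibit Bob's local quantum model $(\H_B,\{\tilde F^y_b\},\{\sigma^x_a\})$ with $\sum_a\sigma^x_a=\sigma$ independent of $x$. Quansality is then upgraded to a tensor representation by Lemma \ref{equivalence}, so no further argument is needed on Alice's side.

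The construction rests on the finite-dimensional structure of the unital $*$-algebra $\mathcal{B}\subset B(\H)$ generated by $\{F^y_b\}$. Using only elementary linear algebra one can show that there is an orthogonal decomposition $\H=\bigoplus_k \H^A_k\otimes\H^B_k$ such that every operator in $\mathcal{B}$ has the form $\bigoplus_k\id_{\H^A_k}\otimes B_k$ and every operator commuting with $\mathcal{B}$ has the form $\bigoplus_k A_k\otimes\id_{\H^B_k}$. The hypothesis $[E^x_a,F^y_b]=0$ then forces $E^x_a=\bigoplus_k (E^x_a)_k\otimes\id$, while $F^y_b=\bigoplus_k\id\otimes(F^y_b)_k$. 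I then set $\H_B:=\bigoplus_k\H^B_k$, $\tilde F^y_b:=\bigoplus_k(F^y_b)_k$, and
\be
\sigma^x_a:=\bigoplus_k\tr_{A,k}\!\left(\big((E^x_a)_k\otimes\id_{\H^B_k}\big)\,P_k\rho P_k\right),
\ee
where $P_k$ projects onto the $k$-th block and $\tr_{A,k}$ denotes the partial trace over $\H^A_k$. A short calculation using block-diagonality and cyclicity of the trace yields $\tr(\tilde F^y_b\sigma^x_a)=\tr(\rho E^x_a F^y_b)=P(a,b|x,y)$, while $\sum_a\sigma^x_a=\bigoplus_k\tr_{A,k}(P_k\rho P_k)$ is manifestly independent of $x$. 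This produces precisely the data of a quansal model.

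The genuinely nontrivial ingredient is the block decomposition itself, and I would prove it from scratch to respect the paper's avoidance of $C^*$-machinery. The center $Z(\mathcal{B})$ is a finite-dimensional commutative $*$-algebra, hence spanned by mutually orthogonal projectors $\{P_k\}$ produced by simultaneous diagonalization of its Hermitian generators. These cut $\H$ into $\mathcal{B}$-invariant subspaces $P_k\H$ on which $\mathcal{B}$ acts as a simple matrix algebra, and the elementary fact that every representation of a full matrix algebra $M_{m_k}(\mathbb{C})$ is a direct sum of copies of the defining one promotes each block to a tensor factor $\H^A_k\otimes\H^B_k$ with the desired action of $\mathcal{B}$ and of its commutant. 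This is the step that carries the real content of the theorem, but it remains entirely within finite-dimensional linear algebra, in keeping with the paper's stated goal of using only ``physical insight and basic linear algebra.''
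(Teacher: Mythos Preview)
Your argument is correct, but it is \emph{not} the route the paper takes. You prove quansality via the Wedderburn-type block decomposition $\H=\bigoplus_k\H^A_k\otimes\H^B_k$ of the finite-dimensional $*$-algebra generated by $\{F^y_b\}$, and then read off $\sigma^x_a$ as partial traces of the diagonal blocks of $\rho$. This is essentially the classical algebraic proof (close in spirit to Tsirelson's original), and it is perfectly valid. The paper, by contrast, never decomposes $\H$ or changes Bob's operators at all: it keeps $\H_B=\H$ and the original $\{F^y_b\}$, and instead builds an ``erasure'' channel $\Omega=\tfrac{1}{|A|}\sum_x\Omega_x$ with $\Omega_x(M)=\sum_aE^x_aME^x_a$, shows that $\Omega^\infty=\lim_N\Omega^N$ exists in finite dimensions (this is where $S_1=S_2$ is invoked), and sets $\sigma^x_a=\Omega^\infty(E^x_a\rho E^x_a)$. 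The quansality condition then follows from $\Omega^\infty\circ\Omega_x=\Omega^\infty$.

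What each approach buys: yours is more direct, yields a smaller Bob space, and is the textbook argument. The paper's dynamical construction has a physical interpretation (Alice randomly remeasures to destroy the record of $x$) and, more importantly, is what drives the rest of the article: finite-dimensionality enters only through the equivalence of $S_1$ and $S_2$, so the proof immediately upgrades to the infinite-dimensional ``Tsirelson's Theorem (bis)'' under the no-Heat-Vision hypothesis, and motivates the Section~\ref{infinite} constructions. Your block-decomposition argument, while cleaner for the finite case, does not feed into those extensions.
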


\begin{proof}

From lemma \ref{equivalence}, it is enough prove that $P(a,b|x,y)$ is quansal, i.e., that there exist operators $\{F^y_b\}$ and quantum states $\{\sigma^x_a\}$ giving rise to $P(a,b|x,y)$ and satisfying (\ref{q_causality}).

To grasp some intuition about the problem, picture the following scenario: Alice performs a measurement and then gives Bob control over the whole space $\H$. Consequently, $\H_B\to\H$, $\sigma^x_a\to\rho^x_a\equiv E^x_a\rho E^x_a$ and Bob's measurement operators are $\{F^y_b\}$. Let us see how this rough approach performs. Indeed, we have that $P(a,b|x,y)=\tr(\rho^x_aE^y_b)$. Nevertheless, the states $\sum_a \rho^x_a\equiv\rho^x$ will be generally $x$-dependent. The problem of this latter strategy is thus that Alice's initial measurement leaves a `mark' $x$ on the state, and this mark is to be erased if we want to recover Tsirelson's theorem. 

Let us then complicate slightly our previous scheme, see Figure \ref{fury}: as before, Alice is going to measure, store the pair of values $(x,a)$ and give Bob total control of $\H$. However, this time she's going to introduce noise to her system before leaving the house. Note that the Kraus operators of any Completely Positive (CP) map Alice applies to the states $\rho^x_a$ will commute with Bob's measurement operators. Thus the statistics of any perturbed state $\tilde{\rho}^x_a$ will be the same, i.e., $\tr(\tilde{\rho}^x_aF^y_b)=\tr(\rho^x_aF^y_b)=P(a,b|x,y)$. On the other hand, it could be that $\tilde{\rho}^x$ is closer to $\tilde{\rho}^{x'}$ than $\rho^x$ to $\rho^{x'}$.

\begin{figure}
  \centering
  \includegraphics[width=16 cm]{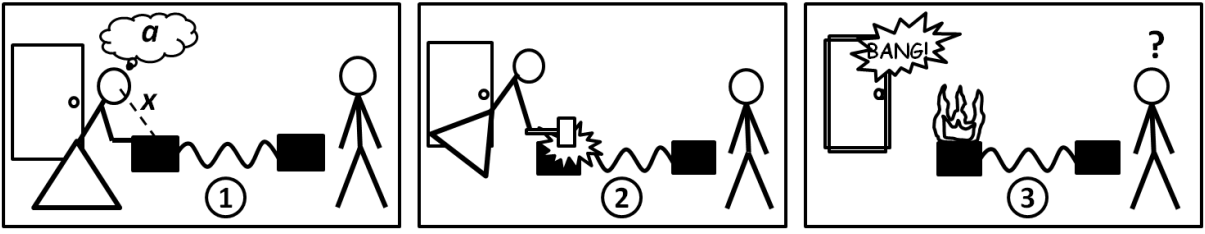}
  \caption{\textbf{How to build a quansal representation for $P(a,b|x,y)\in Q'$.} 1) Alice interacts with her system in some way $x$ and stores the outcome $a$. 2) She adds noise to her part of the lab in order to erase the information of her measurement choice. Since she is restricted to act with her algebra of observables, her operations do not affect Bob's statistics. 3) She gives Bob full control of the bipartite system.}
  \label{fury}
\end{figure}

What could Alice do? Well, we just know that she is capable of implementing a number of measurements. Suppose, then, that after performing a measurement $x$ and storing the outcome $a$, Alice measures randomly her system $N$ times in the hope of erasing the information about $x$. 

Define the completely positive maps $\Omega_x:B(\H)\to B(\H)$

\be
\Omega_x(M)\equiv \sum_{a} E^x_a M E^x_a.
\ee

\noindent The map

\be
\Omega(M)\equiv \frac{1}{|A|}\sum_{x\in A} \Omega_x(M)
\label{eraser}
\ee

\noindent therefore represents the channel that results when Alice applies any interaction $x$ with probability $1/|A|$. We will next see that the limit $\Omega^\infty(\rho)\equiv\lim_{N\to \infty}\Omega^N(\rho)$ always exists, and, indeed, it erases the information on $x$. That is, $\Omega^\infty(\Omega_x(\rho))=\Omega^\infty(\rho)$, for all $x\in A$.

Through the invertible transformation $\ket{i}\bra{j}\to \ket{i}\ket{j}$, we can view $\rho$ as a vector in $S_2\cong\H\otimes \H$ and $\Omega$ as a super operator $\bar{\Omega}$ acting over $\H\otimes \H$. The expression of $\Omega$ thus changes to

\be
\bar{\Omega}=\frac{1}{|A|}\sum_x \bar{\Omega}_x,\mbox{ with } \bar{\Omega}_x=\sum_a E^x_a\otimes (E^x_a)^T.
\ee

Note that $\bar{\Omega}$ is a hermitian positive semidefinite operator, and so it can be diagonalized, i.e., $\bar{\Omega}=\sum_{i} \lambda_i\proj{\Psi_i}$. Moreover, since $\bar{\Omega}$ is a convex combination of projectors $\bar{\Omega}_x$, we have that $\lambda_i\in [0,1]$. It follows that $\bar{\Omega}^N=\sum_{i} \lambda^N_i\proj{\Psi_i}$, and thus

\be
\bar{\Omega}^\infty=\lim_{N\to\infty}\bar{\Omega}^N=\sum_{i|\lambda_i=1}\proj{\Psi_i}.
\ee

The existence of the map $\Omega^\infty$ in $S_2$ has been proven. Since in finite dimensions $S_1=S_2$ (i.e., the norms $\|\cdot\|_1,\|\cdot\|_2$ are equivalent), $\lim_{N\to \infty}\Omega^N(\rho)$ exists for all $\rho\in S_1$. From $\tr\{\Omega^N(\rho)\}=1$ and $(\id\otimes\Omega^N)(\rho)\geq 0$, we thus have that $\Omega^\infty$ is completely positive and trace preserving.

Suppose now that $\ket{\Psi}$ is such that $\bar{\Omega}\ket{\Psi}=\ket{\Psi}$. Then,

\be
\bra{\Psi}\bar{\Omega}\ket{\Psi}=\frac{1}{|A|}\sum_{x=1}^{|A|}\bra{\Psi}\bar{\Omega}_x\ket{\Psi}=1.
\ee

\noindent This can only be true if $\bra{\Psi}\bar{\Omega}_x\ket{\Psi}=1$ for all $x$, and so $\bar{\Omega}_x\ket{\Psi}=\ket{\Psi}$ for all $x$. This implies that $\bar{\Omega}_x\bar{\Omega}^\infty\ket{\phi}=\bar{\Omega}^\infty\ket{\phi}$ for any vector $\ket{\phi}$. We thus have that $\bar{\Omega}^\infty\bar{\Omega}_x=\bar{\Omega}^\infty$, i.e., for any $x$ and any $M\in S_2$,

\be
\Omega^\infty(\sum_a E^x_aM E^x_a)=\Omega^\infty(M).
\label{invar}
\ee

In summary, we have that the states

\be
\sigma^x_a\equiv \Omega^\infty(E^x_a\rho E^x_a)
\ee

\noindent satisfy

\be
\sum_a \sigma^x_a=\sum_a\Omega^\infty(E^x_a\rho E^x_a)=\Omega^\infty(\sum_aE^x_a\rho  E^x_a)=\Omega^\infty(\rho)\equiv \sigma,
\ee

\noindent and so generate the quansal behavior

\be
\tr(\sigma^x_aF^y_b)= \lim_{N\to\infty}(\Omega^N(E^x_a\rho E^x_a)F^y_b)=\tr(E^x_a\rho E^x_a F^y_b)=P(a,b|x,y).
\ee

\end{proof}

\begin{remark}
The previous proof can be extended to the case where $E^x_a,F^b_y$ are general POVM elements rather than projectors. The maps $\Omega_x$ should then be defined as

\be
 \Omega_x(M)\equiv\sum_a \sqrt{E^x_a}M\sqrt{E^x_a}
\ee 

Seen as operators in $S_2$, these maps are also positive semidefinite, and its norm is smaller or equal than 1, since they are unital \cite{channel}: the eigenvalues of $\Omega$ are thus in the interval $[0,1]$. From this point on, the rest of the proof is identical.
\end{remark}

Note that in order to prove Tsirelson's theorem we just invoked finite dimensionality to argue that $S_2=S_1$, and thus that the limiting state $\lim_{N\to\infty}\Omega(\rho)$ exists in $S_1$ for any initial $\rho$. In infinite dimensions, $S_2\not=S_1$ in general, so it can happen that repeated independent random measurements do not thermalize, i.e., that the map $\lim_{N\to\infty}\Omega^N$ is well defined in $S_2\cong\H\otimes\H$ but nevertheless the sequence of states $\lim_{N\to\infty}\Omega^N(\rho)$ does not converge to any quantum state (any element of $S_1$). Such a non-trivial phenomenon is called \emph{Heat Vision} \cite{heat_vision}, and, under some fair assumptions on the structure of the hamiltonian operator of the system\footnote{More concretely, that the spectrum of the hamiltonian $H$ does not contain accumulation points. This is always the case if $H$ describes a finite number of trapped non-relativistic particles subject to a potential bounded from below.}, is always accompanied by an unbounded energy increase. As shown in \cite{heat_vision}, the Heat Vision effect can be observed even in infinite dimensional systems with just two von Neumann measurements of two outcomes each.

Tsirelson's theorem can therefore be extended in the following way:

\begin{theo}{\textbf{Tsirelson's Theorem (bis)}\\}
Let $P(a,b|x,y)$ be a bipartite behavior, attainable in a separable Hilbert space by means of a commutative setup. If Alice's or Bob's measurement operators are such that random independent measurements do not produce Heat Vision, then $P(a,b|x,y)$ is non-relativistic.
\end{theo}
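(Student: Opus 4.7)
The plan is to follow verbatim the construction used in the proof of Theorem 3, invoking Lemma \ref{equivalence} to reduce the problem to producing a quansal representation of $P(a,b|x,y)$. Since non-relativistic behaviours are symmetric under the exchange of Alice and Bob, so (via Lemma \ref{equivalence}) are quansal ones, and we may assume without loss of generality that it is Alice's operators $\{E^x_a\}$ that do not produce Heat Vision. Define
\[
\Omega_x(M) = \sum_a \sqrt{E^x_a}\, M\, \sqrt{E^x_a}, \qquad \Omega = \frac{1}{|A|}\sum_{x\in A}\Omega_x,
\]
and view $\bar{\Omega}$ as an operator on $S_2 \cong \H\otimes\H$. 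Exactly as in the proof of Theorem \ref{finite} and the subsequent Remark, $\bar{\Omega}$ is hermitian, positive semidefinite and unital, so $\|\bar{\Omega}\|\leq 1$; consequently $\bar{\Omega}^N$ converges strongly on $S_2$ to the orthogonal projection $\bar{\Omega}^\infty$ onto its fixed-point eigenspace, and the same fixed-point argument yields $\bar{\Omega}^\infty\bar{\Omega}_x = \bar{\Omega}^\infty$ for every $x\in A$.

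The only place where finite-dimensionality entered the earlier proof was in upgrading this $S_2$-convergence to $S_1$-convergence on arbitrary $\rho\in S_1$. The no-Heat-Vision hypothesis is precisely the statement that this upgrade survives here: for every $\rho\in S_1(\H)$, the sequence $\Omega^N(\rho)$ converges in trace norm to some $\Omega^\infty(\rho)\in S_1$. A pointwise-in-$S_1$ limit of completely positive maps is completely positive, and trace-preservation is stable under trace-norm limits, so $\Omega^\infty$ is a bona fide quantum channel on $S_1(\H)$. Moreover, since each Kraus operator $\sqrt{E^x_a}$ commutes with every $F^y_b$ and $\sum_a E^x_a = \id$, a one-line computation gives $\tr(\Omega_x(M)F^y_b) = \tr(MF^y_b)$, and hence by iteration and trace-norm continuity $\tr(\Omega^\infty(M)F^y_b) = \tr(MF^y_b)$ for every $M\in S_1$.

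With these ingredients in place, the argument closes as before: setting $\sigma^x_a \equiv \Omega^\infty(\sqrt{E^x_a}\,\rho\,\sqrt{E^x_a})$, the identity $\bar{\Omega}^\infty\bar{\Omega}_x=\bar{\Omega}^\infty$ yields $\sum_a\sigma^x_a = \Omega^\infty(\rho) =: \sigma$ independently of $x$, while the preceding remark on $F^y_b$ gives $\tr(\sigma^x_a F^y_b) = \tr(\sqrt{E^x_a}\,\rho\,\sqrt{E^x_a}\,F^y_b) = P(a,b|x,y)$. Hence $P(a,b|x,y)$ is quansal, and Lemma \ref{equivalence} supplies the desired tensor representation. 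The entire obstacle is thus isolated in the $S_2\to S_1$ passage: this is exactly the content of the no-Heat-Vision assumption, and without it $\Omega^\infty(\rho)$ may fail to be trace-class, so no quansal triple can be extracted from the construction and a genuinely different strategy would be required.
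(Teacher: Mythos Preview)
Your proposal is correct and follows exactly the approach the paper intends: the paper does not give a separate formal proof of this theorem but presents it as the immediate observation that, in the proof of Theorem~\ref{finite}, finite dimensionality was invoked \emph{only} to pass from $S_2$-convergence of $\Omega^N$ to $S_1$-convergence, and that the no-Heat-Vision hypothesis is precisely this passage by definition. You have spelled out the details of that observation faithfully (including the POVM variant from the Remark), so there is nothing to add.
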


\section{Infinite dimensions-general results}
\label{infinite}

The previous theorem makes one wonder whether arguments based on Heat Vision are really necessary in order to guarantee that Tsirelson's theorem holds in infinite dimensions. The next result shows that such is not the case.

\begin{lemma}
Suppose that $|A|=2$, and that both measurement settings $x_1,x_2$ have associated two outcomes each. Then, any behavior $P(a,b|x,y)$ admitting a field representation can be approximated arbitrarily well by non-relativistic behaviors.
\end{lemma}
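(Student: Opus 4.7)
The plan is to exploit Halmos' two-projection theorem, which for exactly two dichotomic measurements gives a transparent block structure, and then to use a spectral discretization to approximate by tensor representations. Put $P_1 := E^{x_1}_1$ and $P_2 := E^{x_2}_1$. By Halmos' theorem the Hilbert space decomposes as $\H = \H_c \oplus \H_g$, where on $\H_c$ the projections $P_1,P_2$ commute (splitting into four joint eigenspaces) and on the generic part $\H_g \cong \mathcal{M} \otimes \mathbb{C}^2$ with $P_1 = \id_\mathcal{M} \otimes \mathrm{diag}(1,0)$ and $P_2$ an explicit $2\times 2$ matrix over $\mathcal{M}$ whose entries are polynomials in a single positive contraction $C$ on $\mathcal{M}$ of spectrum in $[0,1]$. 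The von Neumann algebra generated by $P_1,P_2$ on $\H_g$ is then $L^\infty(\sigma(C),M_2(\mathbb{C}))$, so its commutant takes the form $\{B \otimes \id_{\mathbb{C}^2}\}$ with $B$ decomposable in the spectral fibers of $C$. In particular, each Bob operator splits as $F^y_b = B^{y,b} \otimes \id_{\mathbb{C}^2}$ with $B^{y,b}$ commuting with every Borel function of $C$. The commutative summand $\H_c$ already furnishes a tensor representation (Alice deterministic, Bob free), so the real work is on $\H_g$.

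For each $N$, choose a partition $\{I_1,\dots,I_N\}$ of $[0,1]$ with mesh going to $0$ and representatives $c_i \in I_i$. Set $\tilde C_N := \sum_i c_i\,\chi_{I_i}(C)$ and define $P_2^{(N)}$ by substituting $\tilde C_N$ for $C$ in the matrix expression for $P_2$. By continuous functional calculus $\|P_2^{(N)} - P_2\| \to 0$, and since $\tilde C_N$ is a Borel function of $C$ every $F^y_b$ still commutes with $P_2^{(N)}$. The spectral projections $\pi_i := \chi_{I_i}(C)$ split $\mathcal{M} = \bigoplus_i \mathcal{M}_i$ with $\mathcal{M}_i := \pi_i \mathcal{M}$, and on each block $\mathcal{M}_i \otimes \mathbb{C}^2$ Alice now acts by the fixed matrices $p_1,\, p_2(c_i) \in M_2(\mathbb{C})$ while Bob acts as $B^{y,b}|_{\mathcal{M}_i} \otimes \id_{\mathbb{C}^2}$. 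To convert this block decomposition into a genuine tensor product, enlarge Alice's space by an index register: set $\H_A := \mathbb{C}^N \otimes \mathbb{C}^2$, $\H_B := \mathcal{M}$, and define the isometry $\iota: \H_g \hookrightarrow \H_A \otimes \H_B$ by $\iota(\phi \otimes v) := \ket{i} \otimes v \otimes \phi$ for $\phi \in \mathcal{M}_i$, $v \in \mathbb{C}^2$. Putting $\tilde P^{x_1}_a := \id_{\mathbb{C}^N} \otimes p^{x_1}_a$, $\tilde P^{x_2}_a := \sum_i \proj{i} \otimes p^{x_2}_a(c_i)$, $\tilde F^y_b := \id_{\H_A} \otimes B^{y,b}$, and $\tilde\rho := \iota\rho\iota^\dagger$, a block-wise check shows that $\tilde P^x_a \otimes \tilde F^y_b$ preserves $\iota(\H_g)$ and that $\tr(\tilde\rho\,\tilde P^x_a \otimes \tilde F^y_b)$ equals the behavior obtained from $\rho$ using the discretized Alice operators and the original Bob operators.

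Taking a direct sum of this construction with the trivial tensor representation on $\H_c$ produces a non-relativistic behavior $P^{(N)} \in Q$ whose value at each $(a,b,x,y)$ differs from $P(a,b|x,y)$ only through the substitution $P_2 \mapsto P_2^{(N)}$; since $\|P_2^{(N)} - P_2\| \to 0$ we conclude $P \in \bar Q$. The main obstacle is not exhibiting a tensor representation --- once the Halmos picture is in place the structure essentially hands one over --- but properly packaging the spectral fiber of $C$ as a single tensor factor on Alice's side, which forces the introduction of the index ancilla and a careful verification that every extended operator respects $\iota(\H_g)$. The whole argument rests on the fact that the algebra generated by two projections is of type I with representation dimension at most two, parametrized by a single commuting operator $C$; for three or more projections the analogous fiber algebras are genuinely non-commutative and the spectral-theorem reduction breaks down, which is the intuitive reason this cheap route is unavailable in the general infinite-dimensional problem.
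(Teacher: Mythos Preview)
Your argument is correct (up to a small notational slip: having already defined $\tilde F^y_b=\id_{\H_A}\otimes B^{y,b}$ as an operator on all of $\H_A\otimes\H_B$, the combination with Alice should be written $(\tilde P^x_a\otimes\id_{\H_B})\,\tilde F^y_b$ rather than $\tilde P^x_a\otimes\tilde F^y_b$), but it follows a route genuinely different from the paper's. The paper never invokes Halmos' theorem or any structure theory for the algebra generated by two projections. It works instead through the quansality characterization of $Q$ and uses only that $U_x:=E^x_1-E^x_2$ are self-adjoint unitaries: forming the Ces\`aro maps $\Gamma^{(N)}_1=\frac{1}{N+1}\sum_{k=0}^N(\Lambda_1\circ\Lambda_2)^k$ and $\Gamma^{(N)}_2=\Gamma^{(N)}_1\circ\Lambda_1$, a telescoping identity gives $\|\Gamma^{(N)}_1(\Omega_1(\rho))-\Gamma^{(N)}_2(\Omega_2(\rho))\|_1=O(1/N)$, from which one reads off quansal ensembles producing the explicit approximants $P_N=\frac{N+1}{N+2}\,P+\frac{1}{N+2}\,q(a|x)p(b|y)$. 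Your approach is much closer in spirit to the Scholz--Werner nuclearity argument: Halmos' decomposition exhibits the von Neumann algebra of two projections as type~I with fibers of dimension at most two over the spectrum of a single commuting operator $C$, and your discretization of $C$ is a hands-on replacement of that algebra by a finite direct sum of $M_2$-blocks inside an explicit tensor factorization. What the paper's method buys is a completely elementary argument (no spectral theorem beyond diagonalizing a single finite matrix is used anywhere) and a clean fit with its quansality narrative; what yours buys is a structural explanation of \emph{why} two dichotomic settings are special, and a direct construction of the tensor factors $\H_A,\H_B$ without the detour through quansal state ensembles.
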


\begin{proof}
First, note that

\be
\sum_{a=1,2} E^x_a\rho E^x_a=\frac{1}{2}(U_x\rho U_x+\rho),
\ee

\noindent where $U_x= E^x_1-E^x_2$. It is immediate that $U_x=U_x^\dagger$, and $U_x^2=\id$. The operators $U_1,U_2$ are thus unitary and self-adjoint, and, of course, commute with Bob's measurement operators. We will denote by $\{\Lambda_x:x=1,2\}$ the CP maps $\Lambda_x(\rho)=U_x\rho U_x$.

Although we will follow the lines of the proof of Theorem \ref{finite}, there will be two important changes: 1) rather than using a single channel $\Omega^\infty$ to erase the information on $x$, we will consider two families of channels $\Gamma^{(N)}_1,\Gamma^{(N)}_2$, one for each measurement setting $x$. 2) The limits $\lim_{N\to\infty}\Gamma_1^{(N)}(\rho^1),\Gamma_2^{(N)}(\rho^2)$ will not exist in general. Instead, we will demand the much weaker condition that the operator $\Gamma^{(N)}_1(\rho^1)-\Gamma^{(N)}_2(\rho^2)$ tends to 0 in trace norm.

Define the families of maps

\be
\Gamma_1^{(N)}\equiv\frac{1}{N+1}\sum_{k=0}^N(\Lambda_1\circ\Lambda_2)^k;
\Gamma_2^{(N)}\equiv\Gamma_1^{(N)}\circ\Lambda_1.
\ee

\noindent Then it can be checked that

\be
\Gamma_1^{(N)}(\Omega_1(\rho))+\frac{1}{N+1}(\Lambda_1\circ\Lambda_2)^{N+1}(\rho)=\Gamma_2^{(N)}(\Omega_2(\rho))+\frac{1}{N+1}\rho.
\ee

\noindent Given an arbitrary 1-partite behavior $q(a|x)$ for Alice we thus have that the states

\begin{eqnarray}
\sigma^1_a(N)&&\equiv\frac{N+1}{N+2}\Gamma_1^{(N)}(\rho^1_a)+q(a|1)\frac{1}{N+1}(\Lambda_1\circ\Lambda_2)^{N+1}(\rho),\nonumber\\
\sigma^2_a(N)&&\equiv\frac{N+1}{N+2}\Gamma_2^{(N)}(\rho^2_a)+q(a|2)\frac{1}{N+1}\rho
\end{eqnarray}

\noindent satisfy condition (\ref{q_causality}) and so give rise to the non-relativistic behavior

\be
P_N(a,b|x,y)=\frac{N+1}{N+2}P(a,b|x,y)+\frac{1}{N+2}q(a|x)p(b|y).
\ee

As you can see, $\lim_{N\to \infty}P_N(a,b|x,y)=P(a,b|x,y)$, i.e., $P(a,b|x,y)$ can be approximated arbitrarily well by a non-relativistic behavior.

\end{proof}

Unfortunately, the former trick does not work in scenarios other than the two inputs-two outputs case on Alice's side. Imagine, for instance, that Alice can apply two different interactions $1$ and $2$, the former having two possible outputs and the latter, \emph{three} outputs. Let $P(a,b|x,y)\in Q'$ be Alice and Bob's relativistic distribution. Then, using a similar construction one arrives at

\be
P_N(a,b|x,y)\equiv(1-p_N)P(a,b|x,y)+p_Nq(a|x)p(b|y)\in Q,
\ee

\noindent with $p_N\equiv \frac{1+\frac{3}{N+1}}{7+\frac{3}{N+1}}$. Note that this time $\lim_{N\to \infty}p_N=1/7\not=0$, i.e., we cannot prove that $P(a,b|x,y)\in \bar{Q}$.

However, this last result has a clear physical interpretation: suppose that Alice and Bob are performing an AQFT experiment of non-locality, with Alice having two interactions with two and three outcomes to play with; and Bob, an arbitrary number of interactions with an arbitrary number of outcomes. Then, if Alice only measures her system with probability $6/7-\epsilon$ and with probability $1/7+\epsilon$ outputs a random result, the resulting behavior admits a tensor representation.

\section{Conclusion}

In this article we have introduced the axiom of quansality, a generalization of the no-signalling principle where local statistics are constrained to be quantum. In the bipartite case, we showed that quansality is mathematically equivalent to the existence of a non-relativistic model to account for Alice and Bob's observations. We then showed how to exploit such an equivalence in order to recover Tsirelson's theorem and its extension to infinite dimensional systems where Alice has a two-setting/two-outcome scenario. The particular strategy that we described to `quansalize' Bob's system does not work if Alice's number of measurement settings or outcomes is bigger, though. Any further generalization of Tsirelson's theorem will then have to modify Bob's measurement operators as well as Alice's in order to move from $Q'$ to $Q$. It is also worth mentioning that the approach described here cannot be extended to the tripartite case. Indeed, reference \cite{gleason2} gives examples of tripartite distributions $P(a,b,c|x,y,z)$ which admit local quantum descriptions but such that $P(a,b,c|x,y,z)\not\in Q,Q'$.

On a more positive side, the fact that bipartite results can be derived from such simple notions suggests that the full solution of the problem may be achieved with physical intuition alone, without resorting to sophisticated mathematics. So now that they have no excuse for not coping with the current state-of-the-art on Tsirelson's problem, we would thus like to encourage physicists around the world to actually \emph{work} on it.

\end{document}